\def\BibTeX{{\rm B\kern-.05em{\sc i\kern-.025em b}\kern-.08em T\kern-.1667em\lower.7ex\hbox{E}\kern-.125emX}}
\algnewcommand{\Initialize}[1]{%
  \State \textbf{Initialization:}
  \Statex \hspace*{\algorithmicindent}\parbox[t]{0.8\linewidth}{\raggedright #1}
}
\theoremstyle{definition}
\newtheorem{definition}{Definition}
\newtheorem{prop}{Proposition}
\newtheorem{cor}{Corollary}
\newtheorem{remark}{Remark}
\newcommand{\abs}[1]{\left\lvert#1\right\rvert}
\def\vE{\mathbb E}
\font\b=cmr10 scaled\magstep4
\def\bigzerou{\smash{\lower1.7ex\hbox{\b 0}}}
\def\bigzerou{\smash{\lower1.7ex\hbox{\b 0}}}
\begin{document}

\title{A New Algorithm for Computing $\alpha$-Capacity 
\thanks{This work was supported by JSPS KAKENHI Grant Number JP23K16886.}
}

\author{
\IEEEauthorblockN{Akira Kamatsuka}
\IEEEauthorblockA{Shonan Institute of Technology \\ 
Email: \text{kamatsuka@info.shonan-it.ac.jp}
 }
\and
\IEEEauthorblockN{Koki Kazama}
\IEEEauthorblockA{Shonan Institute of Technology \\ 
Email: \text{kazama@info.shonan-it.ac.jp}
 }
\and
\IEEEauthorblockN{Takahiro Yoshida}
\IEEEauthorblockA{Nihon University \\ 
Email: \text{yoshida.takahiro@nihon-u.ac.jp}
 } 
}

\maketitle

\begin{abstract}
The problem of computing $\alpha$-capacity for $\alpha>1$ is equivalent to that of computing the correct decoding exponent. 
Various algorithms for computing them have been proposed, such as Arimoto and Jitsumatsu--Oohama algorithm.
In this study, we propose a novel alternating optimization algorithm for computing the $\alpha$-capacity for $\alpha>1$ based on a variational characterization of the Augustin--Csisz{\'a}r mutual information. 
A comparison of the convergence performance of these algorithms is demonstrated through numerical examples.
\end{abstract}
\section{Introduction}
In the problem of channel coding for a discrete memoryless channel $p_{Y\mid X}$, channel capacity $C:=\max_{p_{X}} I(X; Y)$ \cite{shannon} and the Gallager function $E_{0}(\rho, p_{X}), \rho\in (-1, 1)$ \cite{Gallager:1968:ITR:578869} 
play a central role for analyzing the performance of codes $\mathcal{C}$, where $I(X; Y)$ is the Shannon mutual information (MI) and $p_{X}$ is an arbitrary input distribution on a finite alphabet $\mathcal{X}$. 
Shannon \cite{shannon} showed that the supremum of all achievable coding rates $R$ can be characterized by channel capacity $C$. 
Meanwhile, Gallager \cite{Gallager:1968:ITR:578869} showed the existence of an $N$-length block code $\mathcal{C}$ with a rate $R$ such that its decoding error probability $P_{e}(\mathcal{C})$ is upper bounded as $P_{e}(\mathcal{C}) \leq \exp\{-N \cdot E(R)\}$, where $E(R):=\max_{\rho\in [0, 1]} \{-\rho R + \max_{p_{X}}E_{0}(\rho, p_{X})\}$ is referred to as the \textit{error exponent}.
On the other hand, Arimoto \cite{1055007} showed that for any $N$-length code $\mathcal{C}$ with a rate $R$, 
$P_{e}(\mathcal{C})$ is lower bounded as 
$P_{e}(\mathcal{C}) \geq 1-\exp\{-N \cdot G_{\text{AR}}(R)\}$, where $G_{\text{AR}}(R):=\max_{\rho\in (-1, 0)} \{-\rho R + \min_{p_{X}} E_{0}(\rho, p_{X})\}$ is the \textit{correct decoding exponent}.

Various extensions of Shannon MI and channel capacity have been proposed and used in the analysis of other problems. 
Well-known extensions are \textit{$\alpha$-mutual information} ($\alpha$-MI \cite{7308959}) $I_{\alpha}^{(\cdot)}(X; Y)$ and \textit{$\alpha$-capacity} $C_{\alpha}^{(\cdot)}:=\max_{p_{X}}I_{\alpha}^{(\cdot)}(X; Y), \alpha\in (0, 1)\cup (1, \infty)$, some of which 
have close connection to the Gallager function $E_{0}(\rho, p_{X})$. 
The class of $\alpha$-MI includes the Sibson MI $I_{\alpha}^{\text{S}}(X; Y)$ \cite{Sibson1969InformationR}, 
Arimoto MI $I_{\alpha}^{\text{A}}(X; Y)$ \cite{arimoto1977}, Augustin--Csisz\'{a}r MI $I_{\alpha}^{\text{C}}(X; Y)$ \cite{370121}, and Lapidoth--Pfister MI $I_{\alpha}^{\text{LP}}(X; Y)$ \cite{e21080778}, \cite{8231191}. 
Their theoretical properties have been developed through many studies in literature \cite{7282554}, \cite{e21100969}, \cite{e22050526}, \cite{Nakiboglu:2019aa,8423117}, \cite{9611513},\cite{9834875}, \cite{8849809},  \cite{8804205}, \cite{e23060702}, 
including their operational meaning in problems of hypothesis testing \cite{e23020199}, \cite{6034266}, \cite{8007073} and privacy-guaranteed data-publishing \cite{8804205}.

It is known that the Sibson, Arimoto, and Augustin--Csisz{\'a}r capacity are all equivalent\footnote{It is worth mentioning that the Lapidoth--Pfister capacity $C_{\alpha}^{\text{LP}}$ is also equivalent to these capacities for $\alpha \in (1, \infty)$ \cite[Thm 4]{e22050526}, \cite[p.4]{e21100969}.} for $\alpha\in (0, 1)\cup (1, \infty)$, i.e., 
$C_{\alpha}^{\text{S}}=C_{\alpha}^{\text{A}}=C_{\alpha}^{\text{C}}$ \cite{arimoto1977},\cite{370121},\cite{e22050526}.
Combining this equivalence with the fact that Sibson MI is represented using the Gallager function $I_{\alpha}^{\text{S}}(X; Y) = \frac{\alpha}{1-\alpha} E_{0}(1/\alpha-1, p_{X})$ \cite{7308959} shows that 
maximizing $E_{0}(\rho, p_{X})$ with respect to $p_{X}$ for a fixed $\rho:=1/\alpha-1\in [0, 1]$ is equivalent to computing $\alpha$-capacity for $\alpha \in [1/2, 1]$, while  
minimizing $E_{0}(\rho, p_{X})$ with respect to $p_{X}$ for a fixed $\rho:=1/\alpha-1\in (-1, 0)$ is equivalent to computing $\alpha$-capacity for $\alpha \in (1, \infty)$.

Various algorithms have been proposed to calculate $\alpha$-capacity and exponents.
Arimoto \cite{arimoto1977,1055640} proposed an alternating optimization (AO) algorithm to calculate the Sibson capacity and exponents for a fixed $\rho$ by extending the well-known AO algorithm to compute the channel capacity proposed by Arimoto and Blahut \cite{1054753}, \cite{1054855}. 
Later, Arimoto \cite{BN01990060en} proposed an AO algorithm to directly compute Arimoto capacity. 
The AO algorithms were derived from variational characterizations\footnote{All functionals in the variational characterizations in this section are defined formally in Section \ref{sec:preliminaries} and \ref{sec:main_result}.} of the Sibson MI $I_{\alpha}^{\text{S}}(X; Y)=\max_{r_{X\mid Y}} F_{\alpha}^{\text{S1}}(p_{X}, r_{X\mid Y})$ and Arimoto MI $I_{\alpha}^{\text{A}}(X; Y)=\max_{r_{X\mid Y}}F_{\alpha}^{\text{A1}}(p_{X}, r_{X\mid Y})$. 
Kamatsuka \textit{et al}. \cite{kamatsuka2024new} proposed algorithms for computing these capacities based on recently developed variational characterizations of the Sibson MI $I_{\alpha}^{\text{S}}(X; Y)=\max_{r_{X\mid Y}}F_{\alpha}^{\text{S2}}(p_{X}, r_{X\mid Y})$ and Arimoto MI $I_{\alpha}^{\text{A}}(X; Y)=\max_{r_{X\mid Y}}F_{\alpha}^{\text{A2}}(p_{X}, r_{X\mid Y})$. They also showed that, under appropriate conditions imposed on initial distributions of the algorithms, these are equivalent to those proposed by Arimoto. 
Jitsumatsu and Oohama \cite{8889422} proposed another AO algorithm for computing $\min_{p_{X}}E_{0}(\rho, p_{X})$ for a fixed $\rho\in (-1, 0)$ based on a variational characterization of $\min_{p_{X}}E_{0}(\rho, p_{X}) = \min_{q_{X, Y}}\min_{\tilde{q}_{X, Y}} F_{\rho}^{\text{JO}}(q_{X, Y}, \tilde{q}_{X, Y})$. 
By using this characterization, another variational characterization of the Sibson capacity $C_{\alpha}^{\text{S}} = \tilde{F}_{\alpha}^{\text{JO}}(q_{X, Y}, \tilde{q}_{X, Y})$ can be obtained. Thus, another AO algorithm for computing $C_{\alpha}^{\text{S}}$ can be obtained from this characterization.

In this study, we propose a novel algorithm for computing $\alpha$-capacity for $\alpha \in (1, \infty)$ based on a variational characterization of the Augustin--Csisz{\'a}r MI $I_{\alpha}^{\text{C}}(X; Y) = \max_{\tilde{q}_{Y\mid X}}\max_{r_{X\mid Y}} \tilde{F}_{\alpha}^{\text{C}}(p_{X}, \tilde{q}_{Y\mid X}, r_{X\mid Y})$, 
which was recently derived by Kamatsuka \textit{et al}.\cite{kamatsuka2024algorithms}. 
The main contributions of this study are summarized as follows:
\begin{itemize}
\item We propose an AO algorithm for directly computing the Augustin--Csisz{\'a}r capacity $C_{\alpha}^{\text{C}}$ for $\alpha\in (1, \infty)$ by solving the triple maximization problem: $C_{\alpha}^{\text{C}} = \max_{p_{X}}\max_{\tilde{q}_{Y\mid X}}\max_{r_{X\mid Y}}\tilde{F}_{\alpha}^{\text{C}}(p_{X}, \tilde{q}_{Y\mid X}, r_{X\mid Y})$. (Algorithm \ref{alg:Csiszar})
\item We compare the convergence performances of all algorithms for computing $\alpha$-capacity proposed so far through numerical examples (Section \ref{sec:numerical_example}). 
\end{itemize}
Figure \ref{fig:algo} illustrates the relationship between the previously proposed algorithms and the proposed algorithm for calculating $\alpha$-capacity for $\alpha \in (1, \infty)$.


\begin{figure}[htbp]
\centering
\includegraphics[width=9.1cm, clip]{./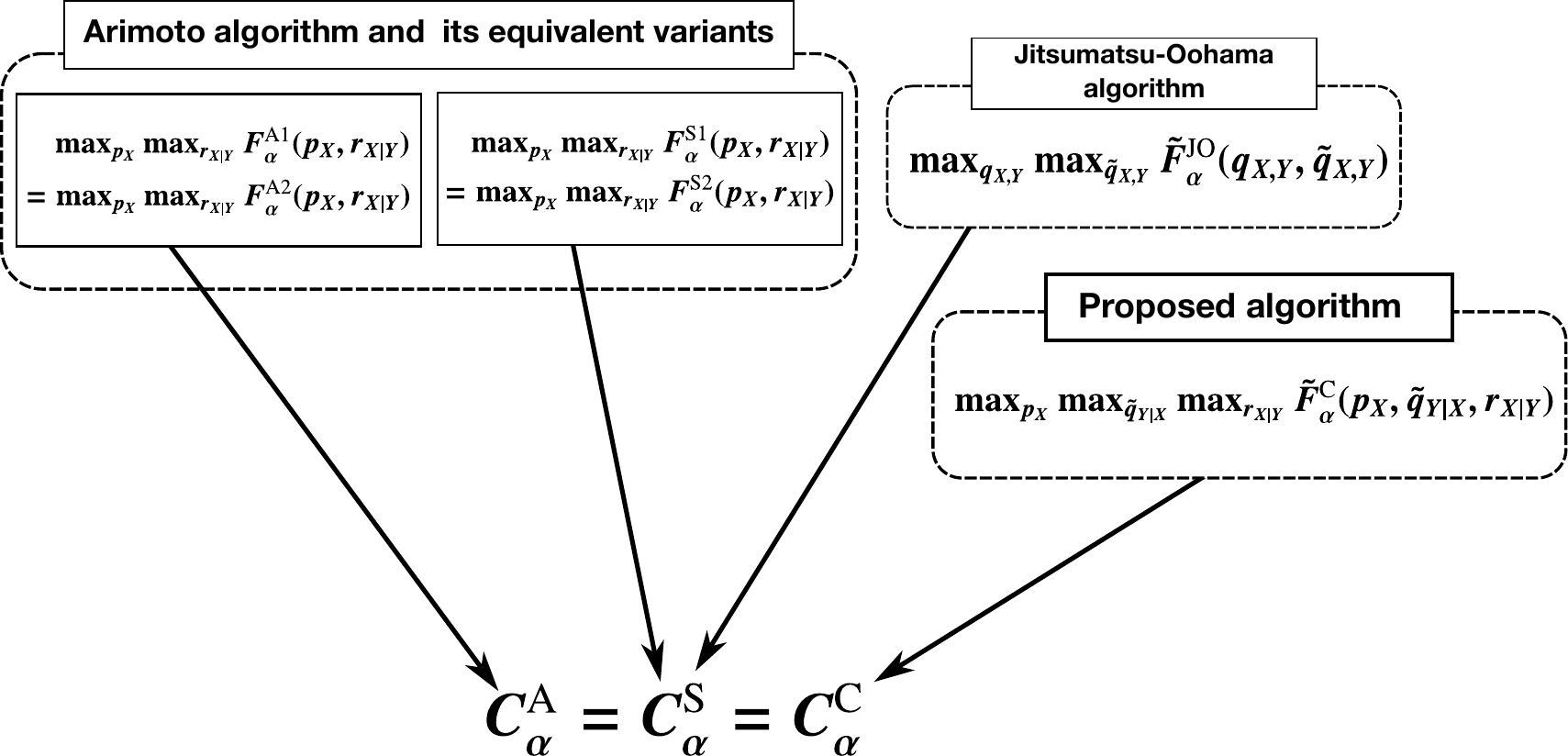}
\caption{Relationship between the algorithms for computing $\alpha$-capacity}
\label{fig:algo}
\end{figure}

\section{Preliminaries}\label{sec:preliminaries}
Let $X$ and $Y$ be random variables on finite alphabets $\mathcal{X}$ and $\mathcal{Y}$, drawn according to the joint distribution $p_{X, Y} = p_{X}p_{Y\mid X}$.
Let $p_{Y}$ be a marginal distribution of $Y$, $H(X)=H(p_{X}):=-\sum_{x}p_{X}(x)\log p_{X}(x)$ be the Shannon entropy, $H(X | Y):=-\sum_{x,y}p_{X}(x)p_{Y\mid X}(y | x)\log p_{X\mid Y}(x | y)$ be the conditional entropy, 
$I(X; Y) = I(p_{X}, p_{Y\mid X}):= H(X) - H(X | Y)$ be the mutual information, 
and $D(p_{X}||q_{X}):=\sum_{x} p_{X}(x)\log \frac{p_{X}(x)}{q_{X}(x)}$ be the Kullback--Leibler divergence between $p_{X}$ and $q_{X}$. 
We denote the expectation of $f(X)$ as $\vE_{X}^{p_{X}}[f(X)]:=\sum_{x}f(x)p_{X}(x)$. 
Throughout this paper, we use $\log$ to represent the natural logarithm.
Here, we review $\alpha$-MI, $\alpha$-capacity and their calculation algorithms.

\subsection{$\alpha$-MI and $\alpha$-Capacity}

\begin{definition} 
Let $\alpha\in (0, 1)\cup (1, \infty)$. 
The \textit{Sibson MI of order $\alpha$} \cite[Def 2.1]{Sibson1969InformationR},\cite[Eq.(13)]{370121}, the \textit{Arimoto MI of order $\alpha$} \cite[Eq.(15)]{arimoto1977}, and the \textit{Augustin--Csisz{\' a}r MI of order $\alpha$} \cite{augusting_phd_thesis},\cite[Eq. (9)]{370121} are defined as follows:
\begin{align}
I_{\alpha}^{\text{S}} (X; Y) &:= \min_{q_{Y}} D_{\alpha} (p_{X}p_{Y\mid X} || p_{X}q_{Y}) \label{eq:def_Sibson_MI}\\ 
&=  \frac{\alpha}{1-\alpha} E_{0} \left( \frac{1}{\alpha}-1, p_{X} \right), \label{eq:closed_form_Sibson_MI} \\ 
I_{\alpha}^{\text{A}}(X; Y) &:= H_{\alpha}(X) - H_{\alpha}^{\text{A}}(X\mid Y)\\
&= \frac{\alpha}{1-\alpha} E_{0}\left( \frac{1}{\alpha}-1, p_{X_{\alpha}} \right) \label{eq:closed_form_Arimoto_MI}, \\ 
I_{\alpha}^{\text{C}}(X; Y) &:= \min_{q_{Y}} \vE_{X}^{p_{X}}\left[D_{\alpha}(p_{Y\mid X}(\cdot\mid X) || q_{Y})\right],  \label{eq:Csiszar_MI} 
\end{align}
where $D_{\alpha}(p_{X} || q_{X}) := \frac{1}{\alpha-1}\log \sum_{x} p_{X}(x)^{\alpha}q_{X}(x)^{1-\alpha}$ is the R{\' e}nyi divergence between $p_{X}$ and $q_{X}$ of order $\alpha$, 
$H_{\alpha}(X) := \frac{1}{1-\alpha} \log \sum_{x} p_{X}(x)^{\alpha}$ is the R{\' e}nyi entropy of order $\alpha$ \cite{renyi1961measures}, 
$H_{\alpha}^{\text{A}}(X | Y):= \frac{\alpha}{1-\alpha}\log\sum_{y} \left( \sum_{x}p_{X}(x)^{\alpha}p_{Y\mid X}(y | x)^{\alpha} \right)^{\frac{1}{\alpha}}$ is the Arimoto conditional entropy of order $\alpha$ \cite{arimoto1977}, 
$E_{0}(\rho, p_{X}):= -\log \sum_{y}\left( \sum_{x}p_{X}(x)p_{Y\mid X}(y | x)^{\frac{1}{1+\rho}} \right)^{1+\rho}$ is the Gallager error exponent function \cite{Gallager:1968:ITR:578869},
and 
\begin{align}
p_{X_{\alpha}}(x) := \frac{p_{X}(x)^{\alpha}}{\sum_{x}p_{X}(x)^{\alpha}} \label{eq:alpha_tilted_dist}
\end{align}
is the $\alpha$-tilted distribution of $p_{X}$ \cite{8804205}. 
\end{definition}

\begin{definition}[$\alpha$-capacity]
Let $\alpha\in (0, 1) \cup (1, \infty)$ and $I_{\alpha}^{(\cdot)}(X; Y)$ be $\alpha$-MI. Then, the \textit{$\alpha$-capacity} is defined as 
\begin{align}
C_{\alpha}^{(\cdot)} &:= \max_{p_{X}} I_{\alpha}^{(\cdot)}(X; Y), 
\end{align}
where the maximum is taken over all possible probability distributions on $\mathcal{X}$.
\end{definition}
\begin{remark}
The values of $\alpha$-MI $I_{\alpha}^{(\cdot)}(X; Y)$ and $\alpha$-capacity $C_{\alpha}^{(\cdot)}$ are extended by continuity to $\alpha=1$ and $\alpha=\infty$. 
In particular, for $\alpha=1$, $\alpha$-MI and $\alpha$-capacity reduce to the Shannon MI $I(X; Y)$ and the channel capacity $C$, respectively. 
\end{remark}

It is known that the Sibson, Arimoto, and Augustin--Csisz{\'a}r capacity are all equivalent.
\begin{prop}[\text{\cite[Lemma 1]{arimoto1977},\cite[Prop 1]{370121},\cite[Thm 4]{e22050526}}] 
Let $\alpha\in (0, 1)\cup (1, \infty)$. Then, 
\begin{align}
C_{\alpha}^{\text{S}}=C_{\alpha}^{\text{A}}=C_{\alpha}^{\text{C}}. 
\end{align}
\end{prop}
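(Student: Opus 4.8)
The plan is to establish the two equalities $C_\alpha^{\text{A}}=C_\alpha^{\text{S}}$ and $C_\alpha^{\text{C}}=C_\alpha^{\text{S}}$ separately, using the Sibson capacity as a common hub.

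For $C_\alpha^{\text{A}}=C_\alpha^{\text{S}}$ I would work directly from the closed forms \eqref{eq:closed_form_Sibson_MI} and \eqref{eq:closed_form_Arimoto_MI}. Comparing them shows that, for every input $p_X$, the Arimoto MI at $p_X$ equals the Sibson MI evaluated at the $\alpha$-tilted input $p_{X_\alpha}$ of \eqref{eq:alpha_tilted_dist}, since both are $\frac{\alpha}{1-\alpha}E_0(1/\alpha-1,\cdot)$ with argument $p_{X_\alpha}$. The key point is that the tilting map $p_X\mapsto p_{X_\alpha}$ is a bijection of the probability simplex onto itself: it preserves supports and its inverse is the $(1/\alpha)$-tilting. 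Hence $\max_{p_X}$ of the Arimoto MI ranges over exactly the same values as $\max_{p_X}$ of the Sibson MI, giving $C_\alpha^{\text{A}}=C_\alpha^{\text{S}}$ for all admissible $\alpha$. This part is exact and requires no interchange of optimizations.

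For $C_\alpha^{\text{C}}=C_\alpha^{\text{S}}$ I would show that both capacities equal the order-$\alpha$ R\'enyi radius $R_\alpha:=\min_{q_Y}\max_x D_\alpha(p_{Y\mid X}(\cdot\mid x)\,\|\,q_Y)$. On the Augustin--Csisz\'{a}r side this is a clean minimax argument: the objective $\sum_x p_X(x)D_\alpha(p_{Y\mid X}(\cdot\mid x)\,\|\,q_Y)$ is affine in $p_X$ and convex in $q_Y$ (R\'enyi divergence being convex in its second argument for every $\alpha>0$) over compact convex simplices, so Sion's minimax theorem permits swapping $\max_{p_X}$ and $\min_{q_Y}$; the inner maximum of an affine functional is attained at a vertex, yielding $C_\alpha^{\text{C}}=R_\alpha$. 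On the Sibson side I would use the identity $D_\alpha(p_Xp_{Y\mid X}\,\|\,p_Xq_Y)=\frac{1}{\alpha-1}\log\sum_x p_X(x)\exp\{(\alpha-1)D_\alpha(p_{Y\mid X}(\cdot\mid x)\,\|\,q_Y)\}$, a $p_X$-weighted quasi-arithmetic mean of the per-letter divergences. For fixed $q_Y$ its maximum over $p_X$ is again attained at a vertex and equals $\max_x D_\alpha(\cdots)$ in both regimes of $\alpha$ (the sign of $\alpha-1$ and the monotone outer transform $\frac{1}{\alpha-1}\log$ conspire to give the same vertex value), so weak duality gives $C_\alpha^{\text{S}}=\max_{p_X}\min_{q_Y}(\cdots)\le\min_{q_Y}\max_{p_X}(\cdots)=R_\alpha$.

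The main obstacle is the reverse inequality $C_\alpha^{\text{S}}\ge R_\alpha$. For $\alpha>1$ it is easy: Jensen's inequality applied to the convex transform $t\mapsto e^{(\alpha-1)t}$ gives the pointwise bound $I_\alpha^{\text{S}}(X;Y)\ge I_\alpha^{\text{C}}(X;Y)$, whence $C_\alpha^{\text{S}}\ge C_\alpha^{\text{C}}=R_\alpha$ and the sandwich closes. For $\alpha\in(0,1)$, however, Jensen reverses to $I_\alpha^{\text{S}}\le I_\alpha^{\text{C}}$ pointwise, so this shortcut is lost, and the quasi-arithmetic mean is convex rather than concave in $p_X$, so Sion's theorem does not apply to the Sibson objective directly. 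I would close this case by invoking the equalizer (saddle-point) property of the R\'enyi center $q_Y^{*}$ attaining $R_\alpha$ together with the Sibson identity, which shows the inner minimizer is unique and exhibits an input $p_X^{*}$ whose inner minimum is achieved at $q_Y^{*}$ with value $R_\alpha$; this is precisely the classical R\'enyi-capacity-equals-radius theorem, and it supplies the missing direction uniformly in $\alpha$.
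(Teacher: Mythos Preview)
The paper does not supply its own proof of this proposition; it is stated with citations to Arimoto, Csisz\'{a}r, and Aishwarya--Madiman and then used as a black box. There is therefore no in-paper argument to compare your proposal against.

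On the substance: your bijection argument via $\alpha$-tilting for $C_\alpha^{\text{A}}=C_\alpha^{\text{S}}$ is correct and is essentially Arimoto's original reasoning. Your route for $C_\alpha^{\text{C}}=R_\alpha$ via Sion's theorem is also valid, since $D_\alpha(\cdot\|q_Y)$ is indeed convex in $q_Y$ for every $\alpha>0$ and the objective is affine in $p_X$. The weak-duality bound $C_\alpha^{\text{S}}\le R_\alpha$ and the Jensen-based reverse bound for $\alpha>1$ are fine. The one place where your proposal is not self-contained is the direction $C_\alpha^{\text{S}}\ge R_\alpha$ for $\alpha\in(0,1)$: you correctly note that Jensen goes the wrong way there, but then you simply invoke ``the classical R\'enyi-capacity-equals-radius theorem,'' which is precisely the statement to be proved. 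That step genuinely needs an independent argument (e.g., exhibiting a saddle point from the explicit Sibson output distribution and KKT conditions, as in Csisz\'{a}r's or Nakibo\u{g}lu's treatments). Since the paper itself defers the entire proposition to the literature, your level of detail is already beyond what the paper provides, but if you want a fully self-contained proof you must fill in that last step rather than cite it.
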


\subsection{Arimoto Algorithm for Computing the Sibson and Arimoto Capacity}
In \cite{arimoto1977,BN01990060en,1055640}, Arimoto derived alternating optimization (AO) algorithms for computing $C_{\alpha}^{\text{S}}$ and $C_{\alpha}^{\text{A}}$. 
Later, Kamatsuka \textit{et al.} proposed similar AO algorithms for computing them in \cite{kamatsuka2024new}. 
These algorithms are based on the following variational characterization of $I_{\alpha}^{\text{S}}(X; Y)$ and $I_{\alpha}^{\text{A}}(X; Y)$:  

\begin{prop}[\text{\cite[Thm 1]{1055640},\cite[Eq. (7.103)]{BN01990060en},\cite[Thm 1]{kamatsuka2024new}}] \label{prop:vc_Sibson_Arimoto_MI}
Let $\alpha\in (0, 1)\cup (1, \infty)$. Then, 
\begin{align}
I_{\alpha}^{\text{S}}(X; Y) 
&= \max_{r_{X\mid Y}} F_{\alpha}^{\text{S1}}(p_{X}, r_{X\mid Y})= \max_{r_{X\mid Y}} F_{\alpha}^{\text{S2}}(p_{X}, r_{X\mid Y}), \\
I_{\alpha}^{\text{A}}(X; Y) 
&= \max_{r_{X\mid Y}} F_{\alpha}^{\text{A1}}(p_{X}, r_{X\mid Y})= \max_{r_{X\mid Y}} F_{\alpha}^{\text{A2}}(p_{X}, r_{X\mid Y}), 
\end{align}
where 
$F_{\alpha}^{\text{S1}}(p_{X}, r_{X\mid Y}) := \frac{\alpha}{\alpha-1}\log \sum_{x, y}p_{X}(x)^{\alpha}p_{Y\mid X}(y | x)$ $r_{X\mid Y}(x | y)^{1-\frac{1}{\alpha}}$, 
$F_{\alpha}^{\text{S2}}(p_{X}, r_{X\mid Y}) := F_{\alpha}^{\text{S1}}(p_{X}, r_{X_{\alpha}\mid Y})$,
$F_{\alpha}^{\text{A1}}(p_{X}, r_{X\mid Y}) := F_{\alpha}^{\text{S1}}(p_{X_{\alpha}}, r_{X\mid Y})$, 
$F_{\alpha}^{\text{A2}}(p_{X}, r_{X\mid Y}) := F_{\alpha}^{\text{S1}}(p_{X_{\alpha}}, r_{X_{\alpha}\mid Y})$, 
and $p_{X_{\alpha}}$ and $r_{X_{\alpha}\mid Y}$ are the $\alpha$-tilted distributions of $p_{X}$ and $r_{X\mid Y}$ 
defined as \eqref{eq:alpha_tilted_dist} and $r_{X_{\alpha}\mid Y}(x | y) := \frac{r_{X\mid Y}(x | y)^{\alpha}}{\sum_{x}r_{X\mid Y}(x | y)^{\alpha}}$. 
\end{prop}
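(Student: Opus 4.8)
The plan is to reduce all four identities to the single Sibson claim $I_\alpha^{\text{S}}(X;Y) = \max_{r_{X\mid Y}} F_\alpha^{\text{S1}}(p_X, r_{X\mid Y})$, and then to obtain the remaining three by elementary substitutions. The first observation is that, for fixed $p_X$ and $p_{Y\mid X}$, the functional $F_\alpha^{\text{S1}}$ depends on $r_{X\mid Y}$ only through the vectors $(r_{X\mid Y}(x\mid y))_{x\in\mathcal X}$, and the feasibility constraints $\sum_x r_{X\mid Y}(x\mid y)=1$ are separate for each $y$. Hence the maximization over the kernel $r_{X\mid Y}$ decouples into $\lvert\mathcal Y\rvert$ independent maximizations, one over each probability vector on the simplex indexed by $y$.

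First I would solve each inner problem. Since $t\mapsto t^{1-1/\alpha}$ is concave for $\alpha>1$ and the prefactor $\frac{\alpha}{\alpha-1}$ is positive, the inner problem is a concave program; I would apply H\"older's inequality (equivalently, a Lagrange-multiplier stationarity computation) to bound, for each $y$, the inner sum by a power of $\sum_x p_X(x)\,p_{Y\mid X}(y\mid x)^{\alpha}$, with equality exactly at the $\alpha$-tilted posterior $r^\star_{X\mid Y}(x\mid y)\propto p_X(x)\,p_{Y\mid X}(y\mid x)^{\alpha}$ (the coefficient of $x$ in $F_\alpha^{\text{S1}}$ raised to the power $\alpha$). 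Summing the resulting bounds over $y$ and applying $\frac{\alpha}{\alpha-1}\log(\cdot)$ recovers precisely the Gallager-function expression in \eqref{eq:closed_form_Sibson_MI}, so that $\max_{r_{X\mid Y}}F_\alpha^{\text{S1}}=I_\alpha^{\text{S}}$. For $\alpha\in(0,1)$ the sign of $\frac{\alpha}{\alpha-1}$ flips and $t\mapsto t^{1-1/\alpha}$ becomes convex, so the reverse form of H\"older's inequality applies and the extremization is again a genuine maximization; this regime I would treat in parallel, tracking the sign throughout.

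Next I would deduce the remaining three identities by substitution, using the stated definitions $F_\alpha^{\text{S2}}(p_X,r_{X\mid Y})=F_\alpha^{\text{S1}}(p_X,r_{X_\alpha\mid Y})$, $F_\alpha^{\text{A1}}(p_X,r_{X\mid Y})=F_\alpha^{\text{S1}}(p_{X_\alpha},r_{X\mid Y})$, and $F_\alpha^{\text{A2}}(p_X,r_{X\mid Y})=F_\alpha^{\text{S1}}(p_{X_\alpha},r_{X_\alpha\mid Y})$. The key auxiliary fact is that the column-wise $\alpha$-tilting map $r_{X\mid Y}\mapsto r_{X_\alpha\mid Y}$ is a bijection of the relative interior of the simplex onto itself, with inverse given by $(1/\alpha)$-tilting; consequently $\max_{r_{X\mid Y}}F_\alpha^{\text{S1}}(p_X,r_{X_\alpha\mid Y})=\max_{r_{X\mid Y}}F_\alpha^{\text{S1}}(p_X,r_{X\mid Y})=I_\alpha^{\text{S}}$, which is the S2 identity. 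For A1, maximizing $F_\alpha^{\text{S1}}(p_{X_\alpha},\cdot)$ yields the Sibson MI evaluated at the input $p_{X_\alpha}$, namely $\frac{\alpha}{1-\alpha}E_0(1/\alpha-1,p_{X_\alpha})$, which by \eqref{eq:closed_form_Arimoto_MI} equals $I_\alpha^{\text{A}}(X;Y)$; the A2 identity then follows by composing the A1 substitution with the tilting bijection.

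I expect the main obstacle to be the rigorous achievability argument rather than the inequality itself: one must check that the extremal tilted posterior is feasible and attains equality even when $p_X$ or the channel contains zeros, so that some coordinates of $r^\star_{X\mid Y}$ vanish and the H\"older equality condition must be read on the support; one must likewise confirm that the $\alpha$-tilting bijection behaves correctly on the boundary of the simplex. Carefully bookkeeping the sign of $\alpha-1$ in both the prefactor and the direction of H\"older's inequality across the two regimes $\alpha\in(0,1)$ and $\alpha\in(1,\infty)$ is the other point requiring attention.
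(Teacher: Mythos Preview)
The paper does not supply its own proof of this proposition: it is stated as a known result with citations to Arimoto \cite{1055640,BN01990060en} and Kamatsuka \textit{et al.}~\cite{kamatsuka2024new}, and no argument appears in the text. Your outline is correct and is essentially the standard route taken in those references: the S1 identity is a direct H\"older computation (with the reverse inequality when $\alpha\in(0,1)$), yielding the optimizer $r^\star_{X\mid Y}(x\mid y)\propto p_X(x)\,p_{Y\mid X}(y\mid x)^\alpha$ that also appears as the update rule in Algorithm~\ref{alg:Arimoto}; the S2 and A2 identities then follow from the bijectivity of column-wise $\alpha$-tilting, and A1 from the relation \eqref{eq:closed_form_Arimoto_MI}. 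The boundary and sign-tracking caveats you flag are the right ones, and neither causes a genuine obstruction here since the H\"older optimizer is explicitly feasible and lies in the image of the tilting map.
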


From Proposition \ref{prop:vc_Sibson_Arimoto_MI}, the Sibson and Arimoto capacity can be represented as double maximization problems in the form of   
$C_{\alpha}^{(\cdot)} = \max_{p_{X}} \max_{r_{X\mid Y}} F_{\alpha}^{(\cdot)}(p_{X}, r_{X\mid Y})$. 
Arimoto proposed an AO algorithm for computing $C_{\alpha}^{\text{S}}$ (Algorithm \ref{alg:Arimoto}, referred to as \textit{Arimoto algorithm}) based on the representation of $C_{\alpha}^{\text{S}} = \max_{p_{X}} \max_{r_{X\mid Y}} F_{\alpha}^{S1}(p_{X}, r_{X\mid Y})$, where $p_{X}^{(0)}$ is the initial distribution \cite{arimoto1977,1055640}. 
By replacing $(p_{X}, r_{X\mid Y})$ in Algorithm \ref{alg:Arimoto} with $(p_{X_{\alpha}}, r_{X\mid Y}), (p_{X}, r_{X_{\alpha}\mid Y})$, or $(p_{X_{\alpha}}, r_{X_{\alpha}\mid Y})$, 
the other algorithms corresponding to each representation can be obtained. Note that these AO algorithms are equivalent if the initial distributions of each algorithm are appropriately selected \cite[Cor 1]{kamatsuka2024new}. 

\begin{algorithm}[h]
	\caption{Arimoto algorithm for computing $C_{\alpha}^{\text{S}}$ \cite{arimoto1977,1055640}}
	\label{alg:Arimoto}
	\begin{algorithmic}[1]
		\Require 
			\Statex $p_{X}^{(0)}, p_{Y\mid X}$, $\epsilon > 0$
		\Ensure
			\Statex approximate value of $C_{\alpha}^{\text{S}}$
		\Initialize{
			$r_{X\mid Y}^{(0)}(x|y) \gets \frac{p^{(0)}_{X}(x)p_{Y\mid X}(y\mid x)^{\alpha}}{\sum_{x}p^{(0)}_{X}(x)p_{Y\mid X}(y\mid x)^{\alpha}}$\\ 
			$F^{(0)}\gets F_{\alpha}^{\text{S1}}(p_{X}^{(0)}, r_{X\mid Y}^{(0)})$ \\ 
			$k\gets 0$ \\
      }
		\Repeat
			\State $p_{X}^{(k+1)}(x) \gets \frac{\left(\sum_{y}p_{Y\mid X}(y\mid x)r^{(k)}_{X\mid Y}(x\mid y)^{1-\frac{1}{\alpha}} \right)^{\frac{\alpha}{\alpha-1}}}{\sum_{x} \left( \sum_{y}p_{Y\mid X}(y\mid x)r^{(k)}_{X\mid Y}(x\mid y)^{1-\frac{1}{\alpha}} \right)^{\frac{\alpha}{\alpha-1}}}$
			\State $k\gets k+1$
			\State $r_{X\mid Y}^{(k)}(x|y) \gets \frac{p^{(k)}_{X}(x)p_{Y\mid X}(y\mid x)^{\alpha}}{\sum_{x}p^{(k)}_{X}(x)p_{Y\mid X}(y\mid x)^{\alpha}}$
			\State $F^{(k)} \gets F_{\alpha}^{\text{S1}}(p_{X}^{(k)}, r_{X\mid Y}^{(k)})$
		\Until{$\abs{F^{(k)} - F^{(k-1)}} < \epsilon$} 
		\State \textbf{return} $F^{(k)}$
	\end{algorithmic}
\end{algorithm}

\subsection{Jitsumatsu--Oohama Algorithm for Computing the Sibson Capacity}
Jitsumatsu and Oohama \cite{8889422} provided an AO algorithm for computing $\min_{p_{X}}E_{0}(\rho, p_{X})$ based on the following variational characterization for $\rho\in (-1, 0)$:\footnote{This characterization can be extended to where a cost constraint exists \cite{8889422}.}  
\begin{prop}[\text{\cite[Prop 1 and Lemma 2]{8889422}}]
Let $\rho\in (-1, 0)$. Then, 
\begin{align}
\min_{p_{X}}E_{0}(\rho, p_{X}) &= \min_{q_{X, Y}}\min_{\tilde{q}_{X, Y}} F_{\rho}^{\text{JO}}(q_{X, Y}, \tilde{q}_{X, Y}), \label{eq:vc_min_Gallager} 
\end{align}
where 
\begin{align}
&F_{\rho}^{\text{JO}}(q_{X, Y}, \tilde{q}_{X, Y}) \notag \\
& := \vE_{X, Y}^{q_{X, Y}} \left[\log \frac{\tilde{q}_{Y\mid X}(Y\mid X)^{1+\rho}\tilde{q}_{Y}(Y)^{-\rho}}{p_{Y\mid X}(Y\mid X)}\right] + D(q_{X, Y} || \tilde{q}_{X, Y}), 
\end{align}
$\tilde{q}_{Y}(y):=\sum_{x}\tilde{q}_{X, Y}(x, y)$, and $\tilde{q}_{Y\mid X}(y | x):=\frac{\tilde{q}_{X, Y}(x, y)}{\sum_{y}\tilde{q}_{X, Y}(x, y)}$. 
\end{prop}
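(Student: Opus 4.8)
The plan is to prove the identity as a two-stage optimization, but with the order of the two minimizations swapped: since both are over (compact) probability simplices, $\min_{q_{X,Y}}\min_{\tilde q_{X,Y}}F_\rho^{\text{JO}} = \min_{\tilde q_{X,Y}}\min_{q_{X,Y}}F_\rho^{\text{JO}}$, and I would carry out the minimization over $q_{X,Y}$ first, because it turns into a clean Gibbs (relative-entropy) minimization. Fixing $\tilde q_{X,Y} = \tilde q_X\tilde q_{Y\mid X}$ and merging the linear term with $D(q_{X,Y}\|\tilde q_{X,Y})$, the logarithms collapse to a single relative entropy:
\[
F_\rho^{\text{JO}}(q_{X,Y},\tilde q_{X,Y}) = D\big(q_{X,Y}\,\|\,m_{\tilde q}\big) - \log\Phi(\tilde q_{X,Y}),
\]
where $\Phi(\tilde q_{X,Y}) := \sum_{x,y}p_{Y\mid X}(y|x)\,\tilde q_X(x)\,\tilde q_{Y\mid X}(y|x)^{-\rho}\,\tilde q_Y(y)^{\rho}$ and $m_{\tilde q}(x,y) := \Phi(\tilde q_{X,Y})^{-1}p_{Y\mid X}(y|x)\tilde q_X(x)\tilde q_{Y\mid X}(y|x)^{-\rho}\tilde q_Y(y)^{\rho}$ is a probability distribution. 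Because relative entropy is nonnegative and vanishes only on the diagonal, the inner minimum is attained at $q_{X,Y} = m_{\tilde q}$ with value $\min_{q_{X,Y}}F_\rho^{\text{JO}} = -\log\Phi(\tilde q_{X,Y})$; no convexity or sign subtlety enters here, which is exactly why I prefer this ordering.

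It then remains to show $\min_{\tilde q_{X,Y}}\big(-\log\Phi\big) = -\log\max_{\tilde q_{X,Y}}\Phi$ equals $\min_{p_X}E_0(\rho,p_X)$, i.e. $\max_{\tilde q_{X,Y}}\Phi(\tilde q_{X,Y}) = \max_{p_X}\exp\{-E_0(\rho,p_X)\}$. The key step is to note that $\Phi$ splits over the output symbols, and that for each fixed $y$ the quantity $\tilde q_Y(y)^{\rho}\sum_x p_{Y\mid X}(y|x)\tilde q_X(x)\tilde q_{Y\mid X}(y|x)^{-\rho}$ can be bounded by a weighted power-mean (Hölder) inequality in the mass assignment $\{\tilde q_X(x)\tilde q_{Y\mid X}(y|x)\}_x$, using $-\rho\in(0,1)$. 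Carrying out this per-$y$ maximization gives, for every $\tilde q_{X,Y}$,
\[
\Phi(\tilde q_{X,Y}) \le \sum_y\Big(\sum_x \tilde q_X(x)\,p_{Y\mid X}(y|x)^{\frac{1}{1+\rho}}\Big)^{1+\rho} = \exp\{-E_0(\rho,\tilde q_X)\},
\]
with equality at the tilted conditional $\tilde q_{Y\mid X}(y|x)\propto p_{Y\mid X}(y|x)^{1/(1+\rho)}$. Maximizing the right-hand side over the marginal $\tilde q_X$ and taking $-\log$ then yields $\min_{\tilde q_{X,Y}}\min_{q_{X,Y}}F_\rho^{\text{JO}} = \min_{p_X}E_0(\rho,p_X)$, as claimed.

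The hardest point, and the one I would spend the most care on, is the second stage: the per-$y$ maximizations are coupled through the normalization $\sum_y\tilde q_{Y\mid X}(y|x) = 1$, so a priori one cannot optimize each $y$ independently. The way around this is that the inequality $\Phi \le \exp\{-E_0(\rho,\tilde q_X)\}$ holds term-by-term in $y$ for arbitrary $\tilde q_{Y\mid X}$, so it is valid despite the coupling; and the tilted conditional attaining equality is itself a genuine (normalized) channel, so the bound is tight and the coupling is not binding at the optimum. I would also verify, as a consistency check, that the stationarity conditions of $\max_{\tilde q_{X,Y}}\Phi$ reproduce Gallager's input-optimality condition, namely that $\sum_y p_{Y\mid X}(y|x)^{1/(1+\rho)}\tilde q_Y(y)^{\rho/(1+\rho)}$ is constant in $x$ on the support of the optimal input distribution.
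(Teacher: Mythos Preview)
The paper does not actually prove this proposition; it is quoted from Jitsumatsu--Oohama with a bare citation, so there is no ``paper's own proof'' to compare against. I will therefore assess your argument on its merits.

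Your first stage is correct and is in fact the cleanest way to organize the computation: merging the linear term with $D(q_{X,Y}\|\tilde q_{X,Y})$ gives exactly $F_\rho^{\text{JO}}(q,\tilde q)=D(q_{X,Y}\|m_{\tilde q})-\log\Phi(\tilde q)$, and the inner minimum over $q_{X,Y}$ is attained at $q_{X,Y}=m_{\tilde q}$ with value $-\log\Phi(\tilde q)$. (This $m_{\tilde q}$ is precisely the update in Algorithm~\ref{alg:JOA}, once one rewrites $\tilde q_X\,\tilde q_{Y\mid X}^{-\rho}\,\tilde q_Y^{\rho}=\tilde q_X^{\,1+\rho}\tilde q_{X\mid Y}^{-\rho}$.)

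The gap is in your second stage, specifically in the equality claim for the H\"older step. Writing the $y$-th summand of $\Phi$ as $\sum_x u_x v_x$ with $u_x^{1/(1+\rho)}=\tilde q_X(x)p_{Y\mid X}(y\mid x)^{1/(1+\rho)}$ and $v_x^{-1/\rho}=\tilde q_X(x)\tilde q_{Y\mid X}(y\mid x)$, H\"older with exponents $(1/(1+\rho),-1/\rho)$ indeed yields $\Phi(\tilde q)\le \exp\{-E_0(\rho,\tilde q_X)\}$. But equality in H\"older requires $\tilde q_{Y\mid X}(y\mid x)=d_y\,p_{Y\mid X}(y\mid x)^{1/(1+\rho)}$ with $d_y$ \emph{independent of $x$}. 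Your ``tilted conditional'' $\tilde q_{Y\mid X}(y\mid x)=p_{Y\mid X}(y\mid x)^{1/(1+\rho)}/Z_x$ has normalizer $Z_x$ depending on $x$, and a short computation shows that plugging it in gives the claimed right-hand side only when $Z_x$ is constant on the support of $\tilde q_X$. For a generic channel and a generic $\tilde q_X$ this fails, so the bound $\Phi(\tilde q)\le\exp\{-E_0(\rho,\tilde q_X)\}$ is \emph{not} tight at that choice, and you have only established $\min_{\tilde q}(-\log\Phi)\ge\min_{p_X}E_0(\rho,p_X)$, not equality.

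A clean repair is to get the reverse inequality without going through H\"older equality at all: restrict to the diagonal $\tilde q_{X,Y}=q_{X,Y}$, where $D(q\|\tilde q)=0$ and a direct calculation gives
\[
F_\rho^{\text{JO}}(q,q)=D(q_{Y\mid X}\|p_{Y\mid X}\mid q_X)-\rho\,I_q(X;Y).
\]
This is the Csisz\'ar--K\"orner/Dueck--K\"orner Lagrangian for the correct-decoding exponent, and its minimum over $q_{X,Y}$ is known to equal $\min_{p_X}E_0(\rho,p_X)$ for $\rho\in(-1,0)$ (this is essentially the content of the cited Proposition~1 in Jitsumatsu--Oohama). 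Hence $\min_{q,\tilde q}F_\rho^{\text{JO}}\le\min_q F_\rho^{\text{JO}}(q,q)=\min_{p_X}E_0(\rho,p_X)$, which together with your H\"older inequality closes the argument. Your ``consistency check'' via stationarity would also work, but only if you carry it out at the \emph{joint} optimum of $(\tilde q_X,\tilde q_{Y\mid X})$ rather than at a fixed $\tilde q_X$; the point is that the H\"older equality condition $d_y$ constant in $x$ is compatible with normalization only at special inputs, and Gallager's KKT condition is exactly what forces this compatibility at the optimal $p_X^*$.
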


From this characterization in \eqref{eq:vc_min_Gallager} and the closed-form expression of the Sibson MI in \eqref{eq:closed_form_Sibson_MI}, the following variational characterization of the Sibson capacity 
$C_{\alpha}^{\text{S}}$ for $\alpha\in (1, \infty)$ can be obtained.
\begin{cor}
Let $\alpha\in (1, \infty)$. Then, 
\begin{align}
C_{\alpha}^{\text{S}} &= \max_{q_{X, Y}}\max_{\tilde{q}_{X, Y}} \tilde{F}_{\alpha}^{\text{JO}}(q_{X, Y}, \tilde{q}_{X, Y}), \label{eq:vc_Sibson_capacity_JO}
\end{align}
where $\tilde{F}_{\alpha}^{\text{JO}}(q_{X, Y}, \tilde{q}_{X, Y}) := \frac{\alpha}{1-\alpha}F_{\frac{1}{\alpha}-1}^{\text{JO}}(q_{X, Y}, \tilde{q}_{X, Y})$. 
\end{cor}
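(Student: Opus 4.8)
The plan is to chain together the closed-form expression of the Sibson MI in \eqref{eq:closed_form_Sibson_MI} with the Jitsumatsu--Oohama variational characterization of $\min_{p_{X}}E_{0}(\rho, p_{X})$ in \eqref{eq:vc_min_Gallager}, while carefully tracking the sign of the prefactor $\frac{\alpha}{1-\alpha}$. First I would unfold the definition of the Sibson capacity and substitute the closed form, writing
\[
C_{\alpha}^{\text{S}} = \max_{p_{X}} I_{\alpha}^{\text{S}}(X; Y) = \max_{p_{X}} \frac{\alpha}{1-\alpha} E_{0}\!\left( \frac{1}{\alpha}-1, p_{X} \right).
\]

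Next I would observe that for $\alpha\in (1, \infty)$ the prefactor $\frac{\alpha}{1-\alpha}$ is strictly negative, so pulling it outside the optimization converts the maximization over $p_{X}$ into a minimization, i.e.
\[
\max_{p_{X}} \frac{\alpha}{1-\alpha} E_{0}\!\left( \frac{1}{\alpha}-1, p_{X} \right) = \frac{\alpha}{1-\alpha} \min_{p_{X}} E_{0}\!\left( \frac{1}{\alpha}-1, p_{X} \right).
\]
At this step I would also verify that $\rho := \frac{1}{\alpha}-1$ lies in $(-1, 0)$ precisely when $\alpha\in (1, \infty)$, which is exactly the regime in which \eqref{eq:vc_min_Gallager} is valid, so that the substitution is legitimate.

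Then I would invoke \eqref{eq:vc_min_Gallager} to replace $\min_{p_{X}} E_{0}(\rho, p_{X})$ with the nested minimization $\min_{q_{X, Y}}\min_{\tilde{q}_{X, Y}} F_{\rho}^{\text{JO}}(q_{X, Y}, \tilde{q}_{X, Y})$. Finally, since $\frac{\alpha}{1-\alpha}<0$, pushing this constant back inside the double minimization flips both minima into maxima, and by the definition $\tilde{F}_{\alpha}^{\text{JO}} = \frac{\alpha}{1-\alpha} F_{\frac{1}{\alpha}-1}^{\text{JO}}$ this yields
\[
C_{\alpha}^{\text{S}} = \max_{q_{X, Y}}\max_{\tilde{q}_{X, Y}} \tilde{F}_{\alpha}^{\text{JO}}(q_{X, Y}, \tilde{q}_{X, Y}),
\]
which is the claimed identity.

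The derivation itself is routine substitution; the one point requiring care—and the main place where an error could creep in—is the bookkeeping of the negative sign of $\frac{\alpha}{1-\alpha}$, which must be used twice: once to convert the outer maximization over $p_{X}$ into the minimization needed to apply the Jitsumatsu--Oohama result, and once again to convert the resulting double minimization back into the double maximization asserted in the statement. Provided this sign is handled consistently, no further analytic work (such as exchanging optima or checking attainment) is needed beyond what is already supplied by the cited results.
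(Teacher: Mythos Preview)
Your proposal is correct and follows exactly the route the paper indicates: the paper does not give a detailed proof of this corollary but simply states that it follows ``from this characterization in \eqref{eq:vc_min_Gallager} and the closed-form expression of the Sibson MI in \eqref{eq:closed_form_Sibson_MI},'' which is precisely the chain of substitutions you spell out. Your additional remarks about the sign of $\frac{\alpha}{1-\alpha}$ and the range check $\rho=\frac{1}{\alpha}-1\in(-1,0)$ make explicit the only nontrivial bookkeeping the paper leaves to the reader.
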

Based on the characterization in \eqref{eq:vc_Sibson_capacity_JO}, another AO algorithm, referred to as \textit{Jitsumasu--Oohama algorithm}, for computing $C_{\alpha}^{\text{S}}$ is obtained as described in Algorithm \ref{alg:JOA}, 
where $q_{X, Y}^{(0)}$ is the initial distribution. 

\begin{algorithm}[h]
	\caption{Jitsumatsu--Oohama algorithm for computing $C_{\alpha}^{\text{S}}, \alpha\in (1, \infty)$ \cite{8889422}}
	\label{alg:JOA}
	\begin{algorithmic}[1]
		\Require 
			\Statex $q_{X, Y}^{(0)}, p_{Y\mid X}$, $\epsilon > 0$
		\Ensure
			\Statex approximate value of $C_{\alpha}^{\text{S}}$
		\Initialize{
			$\tilde{q}_{X, Y}^{(0)}(x, y) \gets q_{X, Y}^{(0)}(x, y)$
			$F^{(0)}\gets \tilde{F}_{\alpha}^{\text{JO}}(q_{X, Y}^{(0)}, \tilde{q}_{X, Y}^{(0)})$ \\ 
			$k\gets 0$ \\
      }
		\Repeat
			\State $q_{X, Y}^{(k+1)}(x, y) \gets \frac{\tilde{q}_{X \mid Y}^{(k)}(x|y)^{1-\frac{1}{\alpha}}p_{Y\mid X}(y|x)\tilde{q}_{X}^{(k)}(x)^{\frac{1}{\alpha}}}{\sum_{x, y}\tilde{q}_{X \mid Y}^{(k)}(x|y)^{1-\frac{1}{\alpha}}p_{Y\mid X}(y|x)\tilde{q}_{X}^{(k)}(x)^{\frac{1}{\alpha}}}$
			\State $k\gets k+1$
			\State $\tilde{q}_{X, Y}^{(k)}(x, y) \gets q_{X, Y}^{(k)}(x, y)$
			\State $F^{(k)}\gets \tilde{F}_{\alpha}^{\text{JO}}(q_{X, Y}^{(k)}, \tilde{q}_{X, Y}^{(k)})$
		\Until{$\abs{F^{(k)} - F^{(k-1)}} < \epsilon$} 
		\State \textbf{return} $F^{(k)}$
	\end{algorithmic}
\end{algorithm}

\section{Proposed Algorithm}\label{sec:main_result}
Here, we propose an AO algorithm for computing the Augustin--Csisz{\'a}r capacity $C_{\alpha}^{\text{C}}:=\max_{p_{X}}I_{\alpha}^{\text{C}}(X; Y)$ 
for $\alpha\in(1, \infty)$ based on the following variational characterization of the Augustin--Csisz{\'a}r MI, which is recently proposed by Kamatsuka \textit{et al}. \cite{kamatsuka2024algorithms}.

\begin{prop}[\text{\cite[Prop 4]{kamatsuka2024algorithms}}] \label{prop:vc_Csiszar_MI}
Let $\alpha\in(1, \infty)$. Then, 
\begin{align}
I_{\alpha}^{\text{C}}(p_{X}, p_{Y\mid X}) &= \max_{\tilde{q}_{Y\mid X}}\max_{r_{X\mid Y}} \tilde{F}_{\alpha}^{\text{C}}(p_{X}, \tilde{q}_{Y\mid X}, r_{X\mid Y}), 
\end{align}
where 
\begin{align}
\tilde{F}_{\alpha}^{\text{C}}(p_{X}, \tilde{q}_{Y\mid X}, r_{X\mid Y}) &:= \vE_{X, Y}^{p_{X}\tilde{q}_{Y\mid X}} \left[\log \frac{r_{X\mid Y}(X\mid Y)}{p_{X}(X)}\right] \notag \\ 
& \qquad  + \frac{\alpha}{1-\alpha} D(p_{X}\tilde{q}_{Y\mid X} || p_{X}p_{Y\mid X}).
\end{align}
\end{prop}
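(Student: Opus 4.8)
The plan is to evaluate the two maxima in sequence, innermost first, so that the triple optimization collapses onto the minimization that \emph{defines} $I_{\alpha}^{\text{C}}$. The only genuinely new work is the outer maximization over $\tilde{q}_{Y\mid X}$; the inner one over $r_{X\mid Y}$ is routine.

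First I would fix $\tilde{q}_{Y\mid X}$ (equivalently the joint $q_{X,Y} := p_{X}\tilde{q}_{Y\mid X}$) and carry out the maximization over $r_{X\mid Y}$. Only the term $\vE_{X,Y}^{p_{X}\tilde{q}_{Y\mid X}}[\log r_{X\mid Y}(X\mid Y)]$ depends on $r_{X\mid Y}$, and for each $y$ this is a cross-entropy that is maximized, by Gibbs' inequality $\sum_{x} q_{X\mid Y}(x\mid y)\log\frac{q_{X\mid Y}(x\mid y)}{r_{X\mid Y}(x\mid y)}\ge 0$, at the posterior $r^{\ast}_{X\mid Y}(x\mid y)=q_{X\mid Y}(x\mid y)$. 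Substituting $r^{\ast}$ and using $q_{X}=p_{X}$ together with $\frac{q_{X\mid Y}(x\mid y)}{p_{X}(x)}=\frac{\tilde{q}_{Y\mid X}(y\mid x)}{q_{Y}(y)}$, the first term collapses to the Shannon mutual information $I(p_{X},\tilde{q}_{Y\mid X})$. Hence $\max_{r_{X\mid Y}}\tilde{F}_{\alpha}^{\text{C}} = I(p_{X},\tilde{q}_{Y\mid X}) + \frac{\alpha}{1-\alpha}D(p_{X}\tilde{q}_{Y\mid X}\,\|\,p_{X}p_{Y\mid X}) =: J(\tilde{q}_{Y\mid X})$.

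Next I would introduce an auxiliary output distribution through the golden formula $I(p_{X},\tilde{q}_{Y\mid X})=\min_{s_{Y}}\sum_{x}p_{X}(x)D(\tilde{q}_{Y\mid X}(\cdot\mid x)\,\|\,s_{Y})$, so that $J(\tilde{q}_{Y\mid X})=\min_{s_{Y}}\Phi(\tilde{q}_{Y\mid X},s_{Y})$ with $\Phi(\tilde{q}_{Y\mid X},s_{Y}):=\sum_{x}p_{X}(x)\big[D(\tilde{q}_{Y\mid X}(\cdot\mid x)\,\|\,s_{Y})+\frac{\alpha}{1-\alpha}D(\tilde{q}_{Y\mid X}(\cdot\mid x)\,\|\,p_{Y\mid X}(\cdot\mid x))\big]$. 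The converse $\max_{\tilde{q}_{Y\mid X}}J\le I_{\alpha}^{\text{C}}$ is then immediate without any swap: for fixed $s_{Y}$ the per-$x$ bracket is concave in $\tilde{q}_{Y\mid X}(\cdot\mid x)$ (its $\tilde{q}\log\tilde{q}$ coefficient is $1+\frac{\alpha}{1-\alpha}=\frac{1}{1-\alpha}<0$) and a short Lagrange computation gives the tilted maximizer $\tilde{q}^{\ast}(\cdot\mid x)\propto s_{Y}^{1-\alpha}p_{Y\mid X}(\cdot\mid x)^{\alpha}$ with maximal value $\frac{1}{\alpha-1}\log\sum_{y}p_{Y\mid X}(y\mid x)^{\alpha}s_{Y}(y)^{1-\alpha}=D_{\alpha}(p_{Y\mid X}(\cdot\mid x)\,\|\,s_{Y})$. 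Thus $\Phi(\tilde{q}_{Y\mid X},s_{Y})\le\sum_{x}p_{X}(x)D_{\alpha}(p_{Y\mid X}(\cdot\mid x)\,\|\,s_{Y})$ for every $s_{Y}$, and minimizing both sides over $s_{Y}$ yields $J(\tilde{q}_{Y\mid X})\le I_{\alpha}^{\text{C}}$.

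For the matching lower bound I would let $q_{Y}^{\ast}$ be the Augustin mean (the minimizer in the definition of $I_{\alpha}^{\text{C}}$, which exists by continuity and compactness of the simplex) and set $\tilde{q}^{\ast}_{Y\mid X}(\cdot\mid x)\propto (q_{Y}^{\ast})^{1-\alpha}p_{Y\mid X}(\cdot\mid x)^{\alpha}$, so that $\Phi(\tilde{q}^{\ast},q_{Y}^{\ast})=I_{\alpha}^{\text{C}}$ by the evaluation above. Since $\min_{s_{Y}}\Phi(\tilde{q}^{\ast},s_{Y})$ is attained at the $Y$-marginal $\bar{q}_{Y}:=\sum_{x}p_{X}(x)\tilde{q}^{\ast}_{Y\mid X}(\cdot\mid x)$, the only remaining point is the self-consistency $\bar{q}_{Y}=q_{Y}^{\ast}$, which would give $J(\tilde{q}^{\ast})=I_{\alpha}^{\text{C}}$ and close the gap. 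I expect this fixed-point step to be the main obstacle, since it is where the defining minimization re-enters; I would settle it by writing the stationarity condition for $\min_{q_{Y}}\sum_{x}p_{X}(x)D_{\alpha}(p_{Y\mid X}(\cdot\mid x)\,\|\,q_{Y})$ and multiplying by $q_{Y}(y)$, which reproduces exactly $\sum_{x}p_{X}(x)\tilde{q}^{\ast}_{Y\mid X}(y\mid x)=q_{Y}^{\ast}(y)$. Alternatively, one could bypass the fixed-point argument entirely by invoking Sion's minimax theorem to swap $\max_{\tilde{q}_{Y\mid X}}$ and $\min_{s_{Y}}$ (legitimate because $\Phi$ is concave in $\tilde{q}_{Y\mid X}$ and convex in $s_{Y}$ over compact convex domains), at the cost of a separate support/finiteness check on the divergences.
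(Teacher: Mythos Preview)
The paper does not prove this proposition; it is quoted from \cite[Prop.~4]{kamatsuka2024algorithms} without argument, so there is no in-paper proof to compare against.

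On its own merits, your proof is correct and natural. The inner maximization over $r_{X\mid Y}$ via Gibbs and the golden-formula rewriting are standard; the per-$x$ maximization of $\Phi$ over $\tilde q_{Y\mid X}(\cdot\mid x)$ that you carry out is precisely the variational representation $D_\alpha(p||q)=\max_{\tilde q}\big[D(\tilde q||q)-\frac{\alpha}{\alpha-1}D(\tilde q||p)\big]$ valid for $\alpha>1$, so the upper bound $J(\tilde q_{Y\mid X})\le I_\alpha^{\text{C}}$ is clean. The only delicate step, as you correctly identify, is the fixed-point identity $\sum_x p_X(x)\tilde q^\ast_{Y\mid X}(\cdot\mid x)=q_Y^\ast$ for the Augustin mean. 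Your derivation via first-order stationarity of $\min_{q_Y}\sum_x p_X(x)D_\alpha(p_{Y\mid X}(\cdot\mid x)||q_Y)$ is legitimate: for $\alpha>1$ the minimizer must satisfy $\text{supp}(q_Y^\ast)\supseteq\text{supp}(p_Y)$ (otherwise some $D_\alpha$ term is $+\infty$), so the Lagrange condition applies on that support and both sides of the identity vanish off it. The Sion alternative you mention also goes through, since $\Phi$ is concave in $\tilde q_{Y\mid X}$ and convex in $s_Y$, with the boundary behaviour controlled by semicontinuity of the KL terms.
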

Thus, the Augustin--Csisz{\'a}r capacity $C_{\alpha}^{\text{C}}$ can be represented as the following triple maximization problem: 
\begin{align}
C_{\alpha}^{\text{C}} &= \max_{p_{X}}\max_{\tilde{q}_{Y\mid X}}\max_{r_{X\mid Y}} \tilde{F}_{\alpha}^{\text{C}}(p_{X}, \tilde{q}_{Y\mid X}, r_{X\mid Y}). \label{eq:triple_max}
\end{align} 

Now, we consider solving the optimization problem in \eqref{eq:triple_max}. 
To this end, we derive the updating formulae for the AO algorithm as follows:

\begin{prop} \label{prop:update_formulae_Csiszar_capacity}
Let $\alpha\in (1, \infty)$. Then, the following holds:
\begin{enumerate}
\item For a fixed $(p_{X}, \tilde{q}_{Y\mid X})$, $\tilde{F}_{\alpha}^{\text{C}}(p_{X}, \tilde{q}_{Y\mid X}, r_{X\mid Y})$ is maximized by 
\begin{align}
r_{X\mid Y}^{*}(x\mid y) &:= \frac{p_{X}(x)\tilde{q}_{Y\mid X}(y\mid x)}{\sum_{x} p_{X}(x)\tilde{q}_{Y\mid X}(y\mid x)}. \label{eq:update_r_X_Y}
\end{align}
\item For a fixed $(p_{X}, r_{X\mid Y})$, $\tilde{F}_{\alpha}^{\text{C}}(p_{X}, \tilde{q}_{Y\mid X}, r_{X\mid Y})$ is maximized by 
\begin{align}
\tilde{q}_{Y\mid X}^{*}(y\mid x) &:= \frac{p_{Y\mid X}(y \mid x) r_{X\mid Y}(x \mid y)^{1-\frac{1}{\alpha}}}{\sum_{y} p_{Y\mid X}(y \mid x) r_{X\mid Y}(x \mid y)^{1-\frac{1}{\alpha}}}. \label{eq:update_q_Y_X}
\end{align}
\item For a fixed $(\tilde{q}_{Y\mid X}, r_{X\mid Y})$, $\tilde{F}_{\alpha}^{\text{C}}(p_{X}, \tilde{q}_{Y\mid X}, r_{X\mid Y})$ is maximized by 
\begin{align}
&p_{X}^{*}(x) := \frac{g_{\tilde{q}_{Y\mid X}, r_{X\mid Y}}(x)}{\sum_{x}g_{\tilde{q}_{Y\mid X}, r_{X\mid Y}}(x)}, \label{eq:update_p_X} 
\end{align}
where 
\begin{align}
&g_{\tilde{q}_{Y\mid X}, r_{X\mid Y}}(x) \notag \\ 
&:=\exp\Big\{\sum_{y}\tilde{q}_{Y\mid X}(y\mid x)\log \tilde{q}_{Y\mid X}(y\mid x)^{\frac{\alpha}{1-\alpha}} \notag \\ 
&\qquad \qquad \qquad \qquad \times r_{X\mid Y}(x\mid y)p_{Y\mid X}(y\mid x)^{\frac{\alpha}{\alpha-1}}\Big\}.
\end{align}
\end{enumerate}
\end{prop}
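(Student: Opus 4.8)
The plan is to treat each of the three claims as a constrained maximization of $\tilde{F}_{\alpha}^{\text{C}}$ over a single conditional (or marginal) distribution with the other two arguments held fixed, and in every case to reduce the problem to the nonnegativity of the Kullback--Leibler divergence (Gibbs' inequality). Concretely, I would repeatedly use the elementary identity that, for a fixed nonnegative weight $m$ on a finite set with normalization $\hat{m}$, one has $\sum_{i} q_{i}\log\frac{m_{i}}{q_{i}} = \log\left(\sum_{i} m_{i}\right) - D\!\left(q \,\|\, \hat{m}\right)$, which is maximized over the simplex precisely at $q = \hat{m}$. The negative sign of the coefficient $\frac{\alpha}{1-\alpha}$ (valid since $\alpha>1$) is what orients the entropy-like terms so that this completion subtracts a nonnegative divergence, thereby certifying a genuine maximum rather than a saddle or a minimum. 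Throughout I assume full support where needed so that all logarithms are finite.

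For claim (1), I would first note that the only term of $\tilde{F}_{\alpha}^{\text{C}}$ depending on $r_{X\mid Y}$ is $\sum_{x,y} p_{X}(x)\tilde{q}_{Y\mid X}(y\mid x)\log r_{X\mid Y}(x\mid y)$. Writing $q_{X,Y}(x,y) := p_{X}(x)\tilde{q}_{Y\mid X}(y\mid x)$ with marginal $q_{Y}$ and conditional $q_{X\mid Y}$, this term becomes $\sum_{y} q_{Y}(y)\sum_{x} q_{X\mid Y}(x\mid y)\log r_{X\mid Y}(x\mid y)$, which decouples across $y$. For each $y$, Gibbs' inequality gives the maximizer $r_{X\mid Y}^{*}(x\mid y) = q_{X\mid Y}(x\mid y)$, i.e. \eqref{eq:update_r_X_Y}.

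For claim (2), I would collect the $\tilde{q}_{Y\mid X}$-dependent terms and observe that they decouple across $x$: for each $x$ the inner objective is $\sum_{y}\tilde{q}_{Y\mid X}(y\mid x)\,a_{x}(y) + \frac{\alpha}{1-\alpha}\sum_{y}\tilde{q}_{Y\mid X}(y\mid x)\log\frac{\tilde{q}_{Y\mid X}(y\mid x)}{p_{Y\mid X}(y\mid x)}$, with $a_{x}(y) := \log\frac{r_{X\mid Y}(x\mid y)}{p_{X}(x)}$. Dividing by the negative number $\frac{\alpha}{1-\alpha}$ turns maximization into minimization of a divergence $D\!\left(\tilde{q}_{Y\mid X}(\cdot\mid x)\,\|\,\hat{m}_{x}\right)$ against the tilted reference $m_{x}(y)\propto p_{Y\mid X}(y\mid x)\exp\!\left(-a_{x}(y)\,\tfrac{1-\alpha}{\alpha}\right)$; since $-\frac{1-\alpha}{\alpha} = 1-\frac{1}{\alpha}$, this reference is proportional to $p_{Y\mid X}(y\mid x)\,r_{X\mid Y}(x\mid y)^{1-\frac{1}{\alpha}}$ (the constant $p_{X}(x)$ factor being absorbed by normalization), yielding \eqref{eq:update_q_Y_X}.

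The main obstacle is claim (3), because $p_{X}$ enters $\tilde{F}_{\alpha}^{\text{C}}$ in three places at once: as the expectation weight, inside $\log\frac{1}{p_{X}}$, and inside the divergence $D(p_{X}\tilde{q}_{Y\mid X}\,\|\,p_{X}p_{Y\mid X})$. The step that unlocks everything is to notice that this divergence is in fact \emph{linear} in $p_{X}$, since the shared factor $p_{X}$ cancels inside the logarithm and leaves $\sum_{x} p_{X}(x)\,D\!\left(\tilde{q}_{Y\mid X}(\cdot\mid x)\,\|\,p_{Y\mid X}(\cdot\mid x)\right)$, while the expectation term contributes $\sum_{x,y}p_{X}(x)\tilde{q}_{Y\mid X}(y\mid x)\log r_{X\mid Y}(x\mid y)$ together with $-\sum_{x}p_{X}(x)\log p_{X}(x) = H(p_{X})$. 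Hence $\tilde{F}_{\alpha}^{\text{C}} = \sum_{x} p_{X}(x)\,c(x) + H(p_{X})$ with $c(x) := \sum_{y}\tilde{q}_{Y\mid X}(y\mid x)\log r_{X\mid Y}(x\mid y) + \frac{\alpha}{1-\alpha}D\!\left(\tilde{q}_{Y\mid X}(\cdot\mid x)\,\|\,p_{Y\mid X}(\cdot\mid x)\right)$. This is the standard ``linear plus entropy'' maximization, solved by the Gibbs distribution $p_{X}^{*}(x)\propto e^{c(x)}$ via the identity $\sum_{x}p_{X}(x)\log\frac{e^{c(x)}}{p_{X}(x)} = \log Z - D(p_{X}\,\|\,\hat{m})$. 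I would finish by matching coefficients, using $\frac{\alpha}{1-\alpha}$ on the $\log\tilde{q}_{Y\mid X}$ term and $-\frac{\alpha}{1-\alpha} = \frac{\alpha}{\alpha-1}$ on the $\log p_{Y\mid X}$ term, to confirm $e^{c(x)} = g_{\tilde{q}_{Y\mid X}, r_{X\mid Y}}(x)$, which gives \eqref{eq:update_p_X}.
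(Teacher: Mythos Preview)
Your argument is correct. For parts (1) and (2) the paper simply cites \cite[Prop 5]{kamatsuka2024algorithms}, so your Gibbs-inequality derivations are at least as explicit. The real difference is in part (3): the paper establishes concavity of $\tilde{F}_{\alpha}^{\text{C}}$ in $p_{X}$ by combining the concavity of $\vE_{X,Y}^{p_{X}\tilde{q}_{Y\mid X}}[\log(r_{X\mid Y}/p_{X})]$ with the convexity of $D(p_{X}\tilde{q}_{Y\mid X}\,\|\,p_{X}p_{Y\mid X})$ (and the sign of $\frac{\alpha}{1-\alpha}$), and then solves the Lagrangian first-order condition $\partial J/\partial p_{X}(x)=0$. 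You instead observe that $D(p_{X}\tilde{q}_{Y\mid X}\,\|\,p_{X}p_{Y\mid X})=\sum_{x}p_{X}(x)\,D(\tilde{q}_{Y\mid X}(\cdot\mid x)\,\|\,p_{Y\mid X}(\cdot\mid x))$ is in fact \emph{linear} in $p_{X}$, so the objective has the form $\sum_{x}p_{X}(x)c(x)+H(p_{X})$ and the Gibbs distribution $p_{X}^{*}\propto e^{c(x)}$ is certified as a global maximizer directly by $D(p_{X}\,\|\,p_{X}^{*})\geq 0$. Your route is calculus-free and makes the global optimality immediate rather than inferred from concavity plus stationarity; the paper's route is the more standard textbook template and perhaps generalizes more mechanically when the structure is less transparent.
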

\begin{proof}
Proofs of \eqref{eq:update_r_X_Y} and \eqref{eq:update_q_Y_X} are presented in \cite[Prop 5]{kamatsuka2024algorithms}. 
Thus, we only prove \eqref{eq:update_p_X}. 
Note that $\tilde{F}_{\alpha}^{\text{C}}(p_{X}, \tilde{q}_{Y\mid X}, r_{X\mid Y})$ is concave with respect to $p_{X}$ for a fixed $(\tilde{q}_{Y\mid X}, r_{X\mid Y})$ since $\vE_{X, Y}^{p_{X}\tilde{q}_{Y\mid X}} \left[\log \frac{r_{X\mid Y}(X\mid Y)}{p_{X}(X)}\right]$ is concave with respect to $p_{X}$ \cite[Sec 10.3.2]{10.5555/1199866}, $D(p_{X}\tilde{q}_{Y\mid X} || p_{X}p_{Y\mid X})$ is convex with respect to $p_{X}$ \cite[Thm 2.7.2]{Cover:2006:EIT:1146355}, and $\frac{\alpha}{1-\alpha} < 0$ for $\alpha\in (1, \infty)$.  
Here, define a Lagrangian $J(p_{X})$ as 
\begin{align}
J(p_{X}) &:= \tilde{F}_{\alpha}^{\text{C}}(p_{X}, \tilde{q}_{Y\mid X}, r_{X\mid Y}) + \lambda (\sum_{x}p_{X}(x)-1), 
\end{align}
where $\lambda$ is a Lagrange multiplier. By solving the equation $\frac{\partial J}{\partial p_{X}(x)}=0$ with respect to $p_{X}(x)$, we obtain \eqref{eq:update_p_X}. 
\end{proof}
Thus, the AO algorithm for computing $C_{\alpha}^{\text{C}}$ is obtained as described in Algorithm \ref{alg:Csiszar}, where $(p_{X}^{(0)}, \tilde{q}_{Y\mid X}^{(0)})$  is the initial distribution. 

\begin{algorithm}[h]
	\caption{AO algorithm for computing $C_{\alpha}^{\text{C}}, \alpha\in (1, \infty)$}
	\label{alg:Csiszar}
	\begin{algorithmic}[1]
		\Require 
			\Statex $p_{X}^{(0)}, \tilde{q}_{Y\mid X}^{(0)}, p_{Y\mid X}$, $\epsilon > 0$
		\Ensure
			\Statex approximate value of $C_{\alpha}^{\text{C}}$
		\Initialize{
			$r_{X\mid Y}^{(0)}(x | y) \gets \frac{p_{X}^{(0)}(x)\tilde{q}_{Y\mid X}^{(0)}(y\mid x)}{\sum_{x} p_{X}^{(0)}(x)\tilde{q}_{Y\mid X}^{(0)}(y\mid x)}$
			$F^{(0)}\gets \tilde{F}_{\alpha}^{\text{C}}(p_{X}^{(0)}, \tilde{q}_{Y\mid X}^{(0)}, r_{X\mid Y}^{(0)})$ \\ 
			$k\gets 0$ \\
      }
		\Repeat
			\State $p_{X}^{(k+1)}(x) \gets \frac{g_{\tilde{q}^{(k)}_{Y\mid X}, r^{(k)}_{X\mid Y}}(x)}{\sum_{x}g_{\tilde{q}^{(k)}_{Y\mid X}, r^{(k)}_{X\mid Y}}(x)}$
			\State $\tilde{q}_{Y\mid X}^{(k+1)}(y\mid x) \gets \frac{p_{Y\mid X}(y \mid x) r_{X\mid Y}^{(k)}(x \mid y)^{1-\frac{1}{\alpha}}}{\sum_{y} p_{Y\mid X}(y \mid x) r_{X\mid Y}^{(k)}(x \mid y)^{1-\frac{1}{\alpha}}}$
			\State $k\gets k+1$
			\State $r_{X\mid Y}^{(k)}(x | y) \gets \frac{p_{X}^{(k)}(x)\tilde{q}_{Y\mid X}^{(k)}(y\mid x)}{\sum_{x} p_{X}^{(k)}(x)\tilde{q}_{Y\mid X}^{(k)}(y\mid x)}$
			\State $F^{(k)}\gets \tilde{F}_{\alpha}^{\text{C}}(p_{X}^{(k)}, \tilde{q}_{Y\mid X}^{(k)}, r_{X\mid Y}^{(k)})$
		\Until{$\abs{F^{(k)} - F^{(k-1)}} < \epsilon$} 
		\State \textbf{return} $F^{(k)}$
	\end{algorithmic}
\end{algorithm}

So far, three different algorithms (Algorithm \ref{alg:Arimoto}, \ref{alg:JOA}, and \ref{alg:Csiszar}) for calculating $\alpha$-capacity $C_{\alpha}^{\text{S}}=C_{\alpha}^{\text{A}}=C_{\alpha}^{\text{C}}, \alpha \in (1, \infty)$ have been obtained. 
For each algorithm, the required computational complexity to compute the updating formulae at each iteration is $O(\abs{\mathcal{X}}\cdot\abs{\mathcal{Y}})$.
In the next section, we compare the speed of convergence of these algorithms through numerical examples.

\begin{table*}[htb]
  \caption{Approximate values of $\alpha$-capacity and the number of iterations $(F^{(N)}, N)$}
  \label{tab:alpha_capacity}
  \centering
  \begin{tabular}{@{} lllll@{}}
    \toprule
    \thead{Algorithm}        & \thead{$\alpha=1.03$} & \thead{$\alpha=1.5$} & \thead{$\alpha=2.0$} & \thead{$\alpha=5.0$} \\ 
    \midrule
    Algorithm \ref{alg:Arimoto} with $p_{X}^{(0)} = u_{X}$
    & $(0.054204678, 851)$ & $(0.07617995, 738)$ & $(0.097030615, 790)$ & $(0.183426237, 20)$ \\ 
	Algorithm \ref{alg:JOA} with $q_{X, Y}^{(0)}= u_{X, Y}$ 
	& $(0.054201694, 19076)$ & $(0.07617965, 2744)$ & $(0.097030139,2258)$ & $(0.183426230,113)$ \\
	Algorithm \ref{alg:JOA} with $q_{X, Y}^{(0)}= u_{X}\times p_{Y\mid X}$ 
	& $(0.054201694, 19075)$ & $(0.07617965, 2743)$ & $(0.097030139, 2257)$ & $(0.183426230, 112)$ \\ 
	Algorithm \ref{alg:Csiszar} with $p_{X}^{(0)}\tilde{q}_{Y\mid X}^{(0)}= u_{X, Y}$ 
	& $(0.054204678, 912)$ & $(0.07617991, 1059)$ & $(0.097030445, 1347)$ & $(0.183426230, 97)$ \\ 
	Algorithm \ref{alg:Csiszar} with $p_{X}^{(0)}\tilde{q}_{Y\mid X}^{(0)} = u_{X}\times p_{Y\mid X}$ 
	& $(0.054204678, 874)$ & $(0.07617991, 1055)$ & $(0.097030445, 1344)$ & $(0.183426231, 95)$ \\ 
    \bottomrule
  \end{tabular}
\end{table*}

\begin{figure*}[t]
\centering
\begin{tabular}{@{}cc@{}}
\includegraphics[width=8cm, clip]{./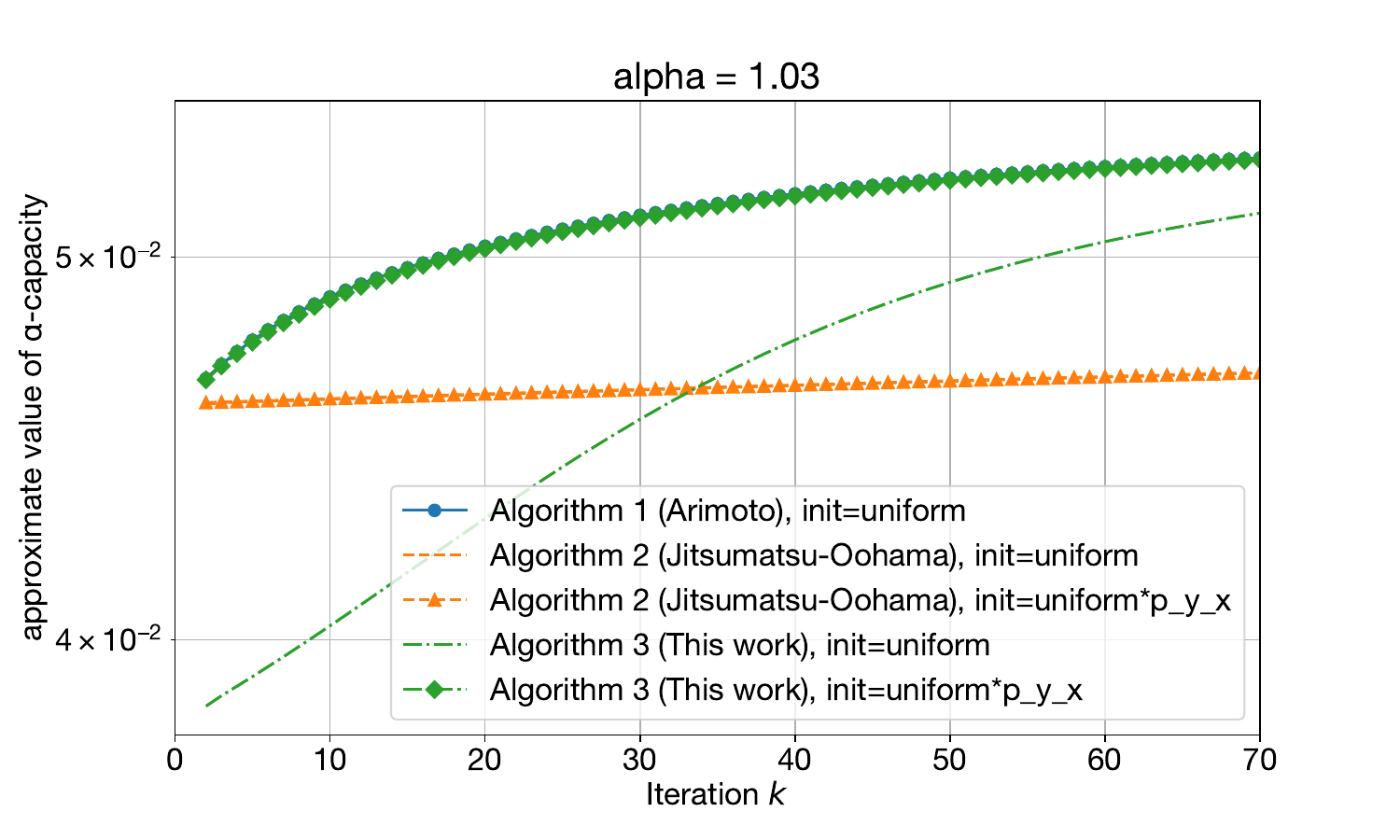} & \includegraphics[width=8cm, clip]{./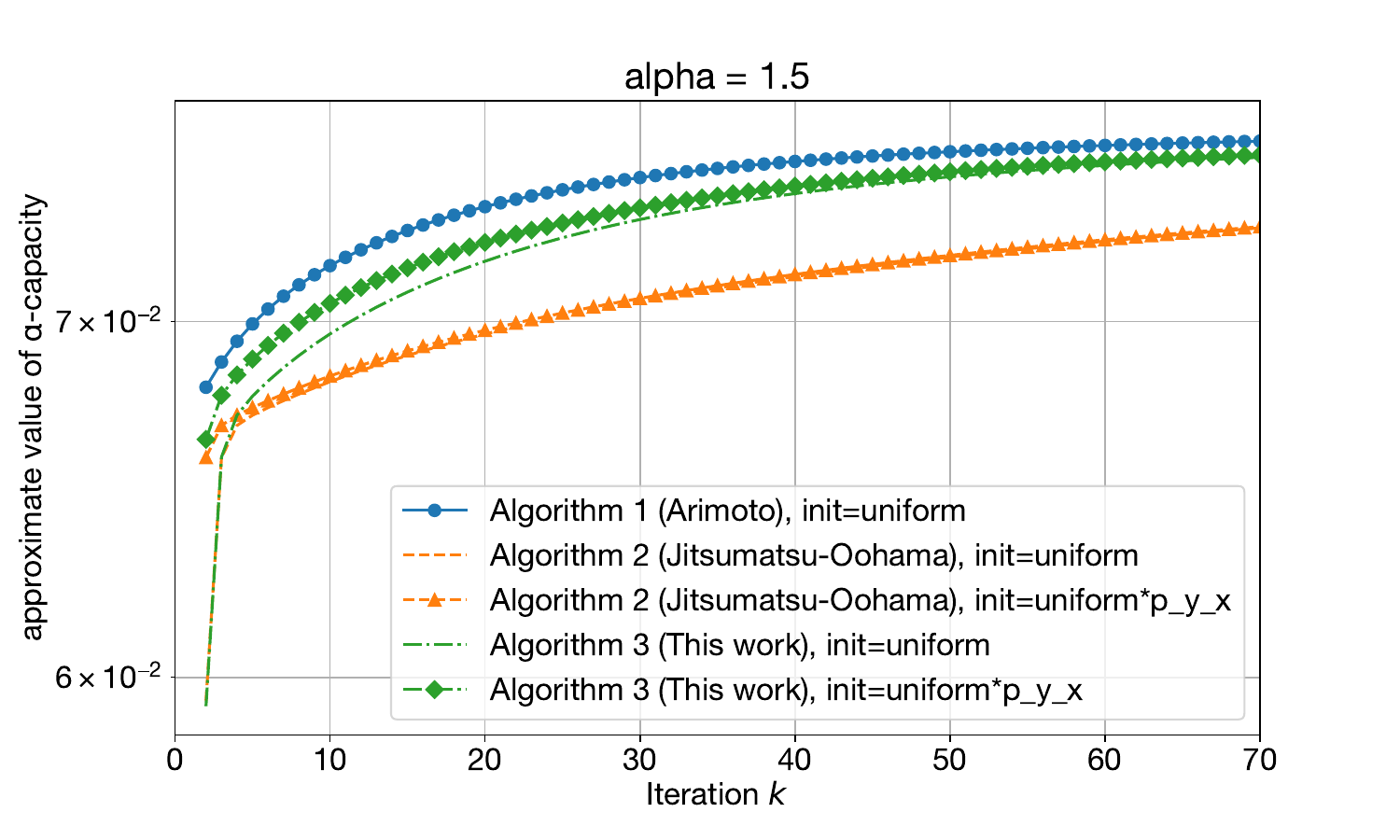} \\ 
(a) $\alpha=1.03$ & (b) $\alpha=1.5$ \\ 
\includegraphics[width=8cm, clip]{./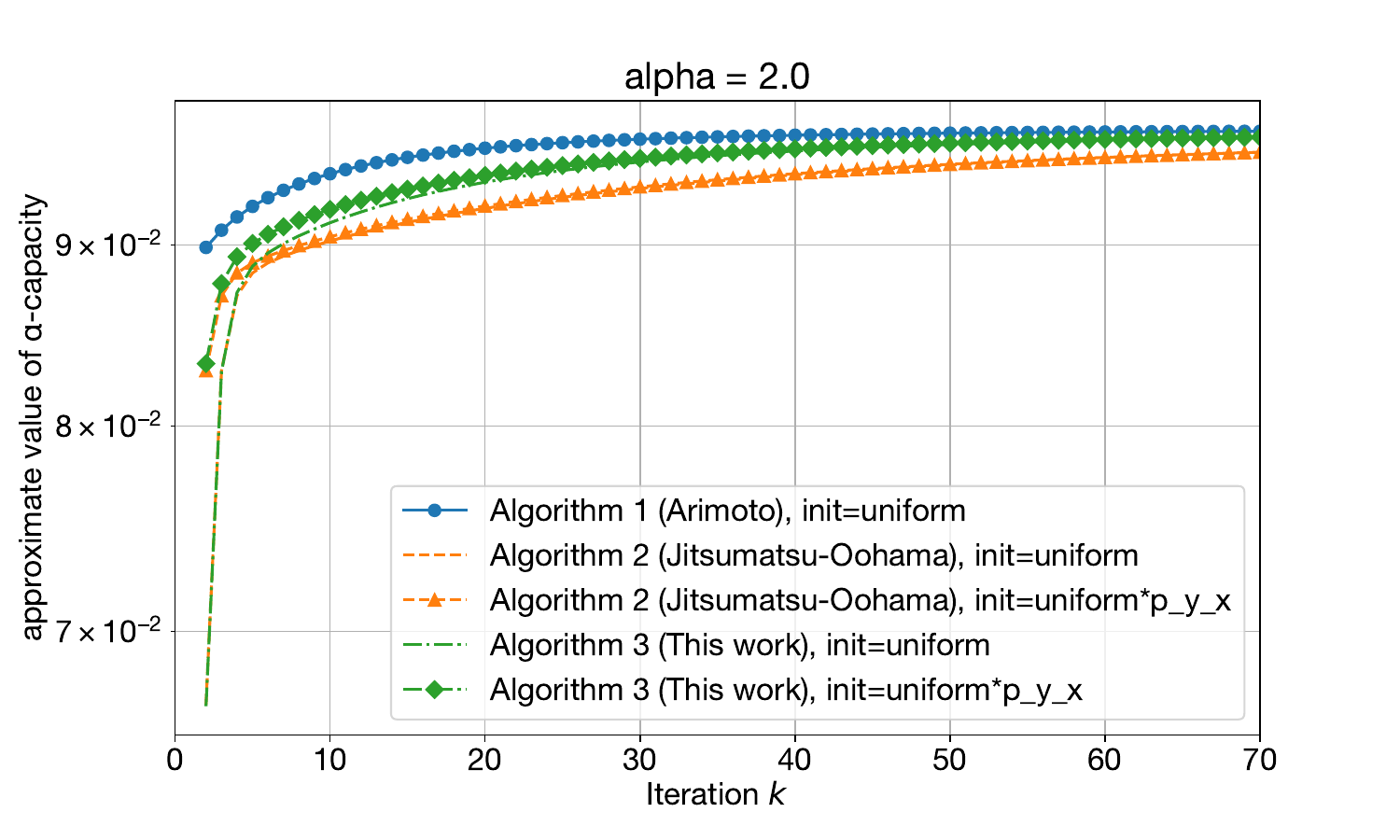} & \includegraphics[width=8cm, clip]{./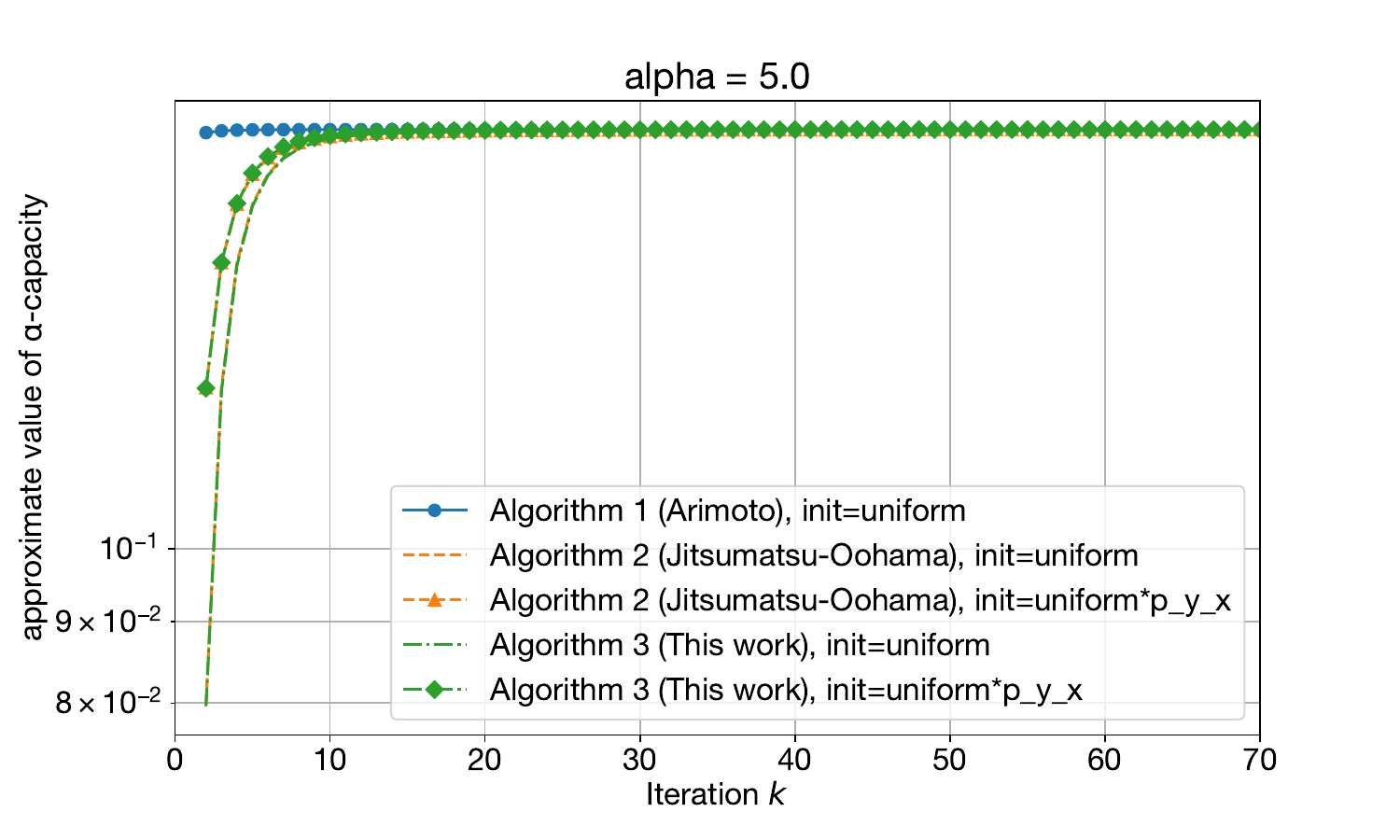} \\  
(c) $\alpha=2.0$  & (d) $\alpha=5.0$
\end{tabular}
\caption{Transitions of approximate value of $\alpha$-capacity $F^{(k)}$ as $k$ increases for (a) $\alpha=1.03$, (b) $\alpha=1.5$, (c) $\alpha=2.0$, and (d) $\alpha=5.0$. Blue solid curve expresses Algorithm \ref{alg:Arimoto} (Arimoto algorithm), orange dashed curve expresses Algorithm \ref{alg:JOA} (Jitsumasu--Oohama algorithm), and green dash-dotted curve expresses Algorithm \ref{alg:Csiszar} (This work), respectively.}
\label{fig:numerical_example}
\end{figure*}

\section{Numerical Example} \label{sec:numerical_example}
This section presents numerical examples to compare the convergence speeds of Algorithm \ref{alg:Arimoto}--\ref{alg:Csiszar} for computing $\alpha$-capacity 
$C_{\alpha}^{\text{S}}=C_{\alpha}^{\text{A}}=C_{\alpha}^{\text{C}}, \alpha \in (1, \infty)$ of the following channel matirix $p_{Y\mid X}$:
\begin{align}
p_{Y\mid X} &:= 
\begin{bmatrix}
0.259 & 0.463 & 0.278 \\ 
0.328 & 0.172 & 0.500 \\ 
0.425 & 0.225 & 0.350
\end{bmatrix}, 
\end{align}
where $(i, j)$-element of the channel matrix\footnote{The elements of the channel matrix $p_{Y\mid X}$ are randomly generated from a uniform distribution. 
Note that numerical examples were computed using Algorithms \ref{alg:Arimoto}--\ref{alg:Csiszar} for other randomly generated channel matrices that did not appear in this study, wherein the behavior of the algorithms is generally the same.} corresponds to the conditional probability $p_{Y\mid X}(j | i), i\in \mathcal{X}=\{1, 2, 3\}, j\in \mathcal{Y}=\{1, 2, 3\}$.
Table \ref{tab:alpha_capacity} shows the number of iteration $N$ and approximate values of $\alpha$-capacity  of the channel $p_{Y\mid X}$, denoted as $F^{(N)}$,  for $\alpha\in \{1.03, 1.5, 2.0, 5.0\}$ computed by each algorithm with specific initial conditions, where  
\begin{itemize}
\item  $N$ is the number of iterations needed to satisfy the stopping conditions of the algorithms when $\epsilon=10^{-9}$ is set. 
\item $u_{X} (x) := \frac{1}{\abs{\mathcal{X}}}$. (uniform distribution on $\mathcal{X}$)
\item $u_{X, Y} (x, y) := \frac{1}{\abs{\mathcal{X}}\cdot\abs{\mathcal{Y}}}$. (uniform distribution on $\mathcal{X}\times \mathcal{Y}$)
\end{itemize}

Figure \ref{fig:numerical_example} illustrates transitions of the approximate values of $\alpha$-capacity $F^{(k)}$ as $k$ increases computed by each algorithm with specific initial conditions.
From Table \ref{tab:alpha_capacity} and Figure \ref{fig:numerical_example}, we can observe that 
\begin{itemize}
\item The number of iterations $N$ needed for convergence decreases as $\alpha$ increases for all algorithms. 
\item Algorithm \ref{alg:Arimoto} shows the fastest convergence speed among algorithms, followed by Algorithm \ref{alg:Csiszar} and Algorithm \ref{alg:JOA}.
The convergence speed of Algorithm \ref{alg:JOA} is significantly slow when $\alpha$ is close to $1.0$. 
\item The effect of setting $p_{Y\mid X}$ to the initial distributions is larger for smaller $\alpha$ in the case of Algorithm \ref{alg:Csiszar}. 
Specifically, when $\alpha$ is close to $1.0$, the convergence performance of Algorithm \ref{alg:Csiszar} with $p_{X}^{(0)}\tilde{q}_{Y\mid X}^{(0)} = u_{X}\times p_{Y\mid X}$ is comparable to that of Algorithm \ref{alg:Arimoto}. 
In contrast, the effect seems negligible  in the case of Algorithm \ref{alg:JOA}.
\end{itemize}

\section{Conclusion}\label{sec:conclusion}
In this paper, we proposed the AO algorithm for directly computing Augustin--Csisz{\'a}r capacity $C_{\alpha}^{\text{C}}$ for $\alpha \in (1, \infty)$
based on the variational characterization of Augustin--Csisz{\'a}r mutual information proposed by Kamatsuka \textit{et al.} \cite{kamatsuka2024algorithms}. 
The proposed algorithm can also compute the correct decoding exponent for a fixed $\rho \in (-1, 0)$ by modifying its objective function. 
Furthermore, we compared the convergence speed with previously proposed algorithms for computing $\alpha$-capacity through numerical examples. 
In a future study, we will prove the global convergence for the Algorithm \ref{alg:Csiszar} and provide a theoretical analysis of the convergence speed of these algorithms.


\end{document}